\newcommand{\BS}{\boldsymbol}
\DeclareMathOperator*{\Cov}{Cov}
\DeclareMathOperator*{\Var}{Var}
\DeclareMathOperator*{\E}{E}
\begin{document}

\title{Timing Channel: Achievable Rate in the Finite Block-Length Regime}

\author{
\IEEEauthorblockN{Thomas~J.~Riedl, Todd P. Coleman and Andrew~C.~Singer} 
\IEEEauthorblockA{University of Illinois at Urbana-Champaign\\ 
Urbana, IL 61801, USA \\ 
Email: triedl2@illinois.edu}
\thanks{This work was supported in part by the department of the Navy, Office of 
Naval Research, under grants ONR MURI N00014-07-1-0738 and ONR
N00014-07-1-0311.}
}

\maketitle

\section{Introduction}

\section{Basic Definitions and Conventions}

\begin{itemize}
\item Symbols denoting random variables are capitalized but symbols denoting their realization are not.
\item Bold face is used for vectors.
\item Blackboard bold or non-italic face are used for sets, calligraphic face for system of sets.
\item We use $\bar{x}$ as a shorthand for the difference $1-x$ for $x \in [0,1]$.
\item $\mathbb{Z}$ is the set of all integers and $\mathbb{Z}_{+} = \{z \in \mathbb{Z}: z \geq 0  \}$.
\item $\BS{2}$ is the set $\{0,1\}$.
\item $[n]$ denotes the set $\{z \in \mathbb{Z}: 1 \leq z \leq n  \}$.
\item Given the input and output sets $\mathrm{X}^n$ and $\mathrm{Y}^n$ a channel is a function $P_{Y^n|X^n}(\cdot | \cdot): \mathcal{A}  \times \mathrm{X}^n \to [0,1]$ such that for each $\BS{x}^n \in \mathrm{X}^n$  $P_{Y^n|X^n}( \cdot | \BS{x}^n)$ is a probability measure on $\mathcal{A}$. $\mathcal{A}$ is some $\sigma$-algebra over $\mathrm{Y}^n$.
\item We define the finite sets $\mathrm{X}^n$ and $\mathrm{Y}^n$ to be $\mathbb{Z}_{+} \times \BS{2}^{n-1} $ and $\BS{2}^{n}$, repectively, and associate the $\sigma$-algebras $\mathcal{P}(\mathrm{X}^n)$ and $\mathcal{P}(\mathrm{Y}^n)$ with them. $\mathcal{P}$ indicates the power set.
\item Given a distribution $P_{X^n}$ on $\mathrm{X}^n$ we define the distribution $P_{Y^n} = \int_{\mathrm{X}^n} P_{Y^n|X^n}(\BS{y}^n|\BS{x}^n) P_{X^n}(d X^n)$ and the function $i(\BS{x}^n,\BS{y}^n) = \log \frac{P_{Y^n|X^n}(\BS{y}^n|\BS{x}^n)}{P_{Y^n}(\BS{y}^n)}$ for $\BS{x}^n \in \mathrm{X}^n$ and $\BS{y}^n \in \mathrm{Y}^n$. This function is called the information density. 
\item The expected value of a random variable $X$ is denoted by $\mathbb{E}[X]$.
\item The mutual information between two random vectors $\BS{X}^n$ and $\BS{Y}^n$ is $\mathbb{E}[i(\BS{x}^n,\BS{y}^n)]$.
\item The capacity $C(\lambda)$ of the channel is $\sup_{P_{X^n}} \mathbb{E}[i(\BS{x}^n,\BS{y}^n)]$ where $P_{X^n}$ is such that the rate $\mathbb{E}[\sum_{i=1}^{n-1} X_i]/(n-1)$ does not exceed $\lambda$.
\item The binary entropy function $H(p)= - p \log p - \bar{p} \log \bar{p}$.
\item We drop subscripts whenever they are clear from the context. For example $P_{Y^n|X^n}(\BS{y}^n|\BS{x}^n) = P(\BS{y}^n|\BS{x}^n)$.
\end{itemize}

\section{System Description and Preliminaries}

The communication channel we consider is an interesting example of a channel with memory. It is essentially a probabilistic single server queuing system with the length of the queue being the memory of channel. At each discrete time instance $i$ the random variable $\tilde{X}_i$, $i \in [n-1]$ indicates if there was an arrival at the back of the queue at time $i$. The initial length of the queue $Q_0$ is a non-negative integer-valued random variable with distribution $P_{Q_0}$. The variable $Q_0$ together with the vector $\tilde{X}_i$, $i \in [n-1]$ is considered the channel input vector ${\BS{X}}^n \in \mathrm{X}^n$.
We define the random variables $X_i, \tilde{Y}_i, Y_i$ and  $Q_i$ for $i \in  [n-1] \cup \{ 0 \}$ such that
\begin{align}
X_i &= \sum_{l = 1}^i \tilde{X}_l \\
Y_i &= \sum_{l = 0}^i \tilde{Y}_l \\
Q_i &= Q_0 + X_i - Y_{i-1} = Q_{i-1} + \tilde{X}_i - \tilde{Y}_{i-1}
\end{align}
where the binary random variables $\tilde{Y}_{i}$ are conditionally independent given $Q_i$ and are distributed according to the transition probability function
\begin{align}
P_{\tilde{Y}_{i}|Q_i}(\tilde{Y}_{i}|Q_i) = 
\left\{ \begin{array}{rl} 
1; &\tilde{Y}_{i} = 0, Q_i = 0 \\ 
\bar{\mu}; &\tilde{Y}_{i} = 0, Q_i > 0  \\ 
\mu; &\tilde{Y}_{i} = 1, Q_i > 0
\end{array} \right.
\end{align}
The vector $\tilde{Y}_i$, $i \in [n-1] \cup \{ 0 \}$ is considered the channel output vector $\BS{Y}^n \in \mathrm{Y}^n$.

Clearly, $X_i$ ($Y_i$) counts the total number of arrivals (departures), $Q_i$ denotes the length of the queue at time $i$ and $\tilde{Y}_{i}$ indicates if there was a departure from the front of the queue at time $i$. The described relationships are illustrated in Figure \ref{timingChannel}.
\begin{figure}[!hbtp]
\psfrag{xt}{$\tilde{X}_i$}
\psfrag{x}{$X_i$}
\psfrag{q}{$Q_i$}
\psfrag{Z}{$z^{-1}$}
\psfrag{E}{$\sum$}
\psfrag{y}{$Y_i$}
\psfrag{yt}{$\tilde{Y}_{i}$}
\psfrag{e}{$\bar{\mu}$}
\psfrag{Q=0}{$\scriptstyle{Q=0}$}
\psfrag{Q>0}{$\scriptstyle{Q>0}$}
\psfrag{0}{$\scriptstyle{0}$}
\psfrag{1}{$\scriptstyle{1}$}
\centering
\includegraphics[width = 0.9\columnwidth]{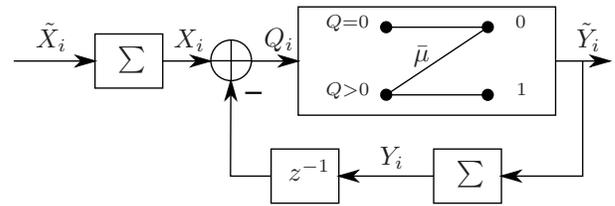}
\caption{Timing Channel Diagram}
\label{timingChannel}
\end{figure}

In principle the distribution on $\BS{X}^n$ could be chosen arbitrary but we will assume $\tilde{X}_i$, $i \in [n-1]$ to be i.i.d. Bernoulli$(\lambda)$ random variables for the following reason:
\newtheorem{theorem}{Theorem}
\begin{theorem}
The distribution on $\BS{X}^n$ that achieves capacity makes $\tilde{X}_i$, $i \in [n-1]$ i.i.d. Bernoulli$(\lambda)$. 
\end{theorem}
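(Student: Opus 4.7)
I would analyze $I(\BS{X}^n;\BS{Y}^n) = H(\BS{Y}^n) - H(\BS{Y}^n \mid \BS{X}^n)$ by handling the two entropies separately. For the conditional entropy: given $\BS{X}^n$ and $\tilde{Y}_0,\ldots,\tilde{Y}_{i-1}$, the queue length $Q_i$ is a deterministic function via the recursion $Q_i = Q_{i-1}+\tilde{X}_i-\tilde{Y}_{i-1}$, so the transition law yields $H(\tilde{Y}_i\mid \BS{X}^n,\tilde{Y}^{i-1}) = H(\mu)\,\mathbf{1}[Q_i>0]$. Chaining,
\begin{equation*}
H(\BS{Y}^n\mid \BS{X}^n) = H(\mu)\,\mathbb{E}\Bigl[\sum_{i=0}^{n-1}\mathbf{1}[Q_i>0]\Bigr],
\end{equation*}
and the expected number of busy slots is pinned down, up to $O(1)$ boundary terms, by the flow-conservation identity $\sum_i \tilde{Y}_i + Q_{n-1} = Q_0 + X_{n-1}$. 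So this term depends on $P_{X^n}$ only through the marginal arrival rate.

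For the output entropy I would use $H(\BS{Y}^n)\leq \sum_i H(\tilde{Y}_i) \leq n H(\bar{\tilde{Y}})$, with $\bar{\tilde{Y}} = n^{-1}\sum_i \mathbb{E}[\tilde{Y}_i]$, combined with $\bar{\tilde{Y}}\leq\lambda+O(1/n)$ from the same flow-conservation identity; this gives $H(\BS{Y}^n)\leq n H(\lambda)+o(n)$. To saturate this bound I would take $P_{Q_0}$ to be the stationary distribution of the Geo/Geo/1 queue and invoke the discrete-time analog of Burke's theorem: under i.i.d.\ Bernoulli$(\lambda)$ arrivals the departure process $\tilde{Y}_0,\ldots,\tilde{Y}_{n-1}$ is itself i.i.d.\ Bernoulli$(\lambda)$, so $H(\BS{Y}^n)=nH(\lambda)$.

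The main obstacle is translating these essentially stationary/asymptotic statements into a clean optimality assertion at finite blocklength, because $Q_0$ is part of the input and the channel is non-stationary near the boundaries. A rigorous route will likely combine (i) concavity of $I$ in $P_{X^n}$ for a fixed channel, (ii) an explicit stationary choice of $P_{Q_0}$ that makes the marginal analysis exact rather than asymptotic, and (iii) a perturbation or exchange argument that strictly decreases mutual information whenever the $\tilde{X}_i$'s are not independent or not identically Bernoulli$(\lambda)$. I expect step (iii) to be the hardest: since the channel is not permutation-invariant in the time index, standard symmetrization does not apply directly, and one has to exploit the specific queueing dynamics to justify replacing a general $P_{X^n}$ by its i.i.d.\ Bernoulli counterpart.
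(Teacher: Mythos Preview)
The paper does not actually prove this theorem: its entire ``proof'' is a one-line citation to Bedekar's work. There is therefore no in-paper argument to compare your sketch against.

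That said, your outline captures the standard route taken in the cited reference. The decomposition $I = H(\BS{Y}^n) - H(\BS{Y}^n\mid\BS{X}^n)$, the identification of $H(\BS{Y}^n\mid\BS{X}^n)$ with $H(\mu)$ times the expected number of busy slots, the flow-conservation identity tying busy time to the arrival rate, and the upper bound on $H(\BS{Y}^n)$ saturated via the discrete Burke theorem are precisely the ingredients used there. Your self-diagnosis is also accurate: as written your argument only controls $\tfrac{1}{n}I(\BS{X}^n;\BS{Y}^n)$ up to $o(1)$, whereas the theorem as stated here concerns the exact finite-$n$ maximizer. In the referenced work this is handled by passing to the stationary ergodic setting and optimizing the mutual-information \emph{rate} directly, so the exchange/symmetrization step you flag as (iii) is in fact sidestepped rather than carried out.

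One technical caution on your upper bound: the chain $\sum_i H(\tilde{Y}_i)\le nH(\bar{\tilde{Y}})$ followed by $\bar{\tilde{Y}}\le\lambda+O(1/n)$ only yields $H(\bar{\tilde{Y}})\le H(\lambda)+o(1)$ when $\lambda\le 1/2$, since the binary entropy is decreasing on $[1/2,1]$. For $\lambda>1/2$ a competing input with departure rate near $1/2$ would beat your bound on $H(\BS{Y}^n)$ alone, and you would need to bring the conditional-entropy term back into the comparison (or, equivalently, optimize over the effective arrival rate rather than assuming it is saturated at $\lambda$).
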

\begin{proof}
The proof can be found in \cite{Bedekar98onthe}.
\end{proof}

Most of the analysis in this paper is built on the theory of Markov chains and so clearly we have to start with a proper definition of this kind of stochastic process. Given a probability space $(\Omega,\mathcal{F},P)$ a discrete time stochastic process $\Phi$ on a state space $\mathbb{X}$ is, for our purposes, a collection of random variables $\Phi_i$, $i \in \mathbb{N}_0$ where each $\Phi_i$ takes values in the set $\mathbb{X}$. The random variables are measurable with respect to some given $\sigma$-field $\mathcal{A}$ over $\mathbb{X}$. We only consider countable state spaces here and can hence take $\Omega$ to be the product space $\prod_{i = 0}^\infty \mathbb{N}_0$ and take $\mathcal{F}$ to be the product $\sigma$-algebra $\bigvee_{i=0}^\infty \sigma(\mathbb{N}_0)$ defined in the usual way. If $P(\Phi_{i+1} = \phi_{i+1}|\Phi_i = \phi_i, \ldots, \Phi_0 = \phi_0) = P(\Phi_{i+1} = \phi_{i+1}|\Phi_i = \phi_i)$ holds, we call the process Markov. Given an initial measure $P_{\Phi_0}$ and transition probabilities $P_{\Phi_{i+1}|\Phi_i}(\phi_{i+1}|\phi_i)$  there exists a probability law $P$ satisfying $P(\Phi_{i+1} = \phi_{i+1},\Phi_i = \phi_i, \ldots, \Phi_0 = \phi_0) =  P_{\Phi_0}(\phi_0) \prod_{l=0}^{i} P_{\Phi_{l+1}|\Phi_l}(\phi_{l+1}|\phi_l)$ by the Kolmogorov extension theorem \cite{meyntwee09,gray09}.

The length of the queue $Q_i$ forms an irreducible Markov chain $Q = \{\Omega,\mathcal{F},Q_i,P_Q \}$ with transition probabilities
\begin{align}
P_{Q_{i+1}|Q_i}(q_{i+1}|q_i) = 
\left\{ \begin{array}{rl} 
\lambda; & q_{i+1}=q_i+1,q_i = 0 \\ 
\bar{\lambda}; & q_{i+1} = q_i, q_i = 0 \\ 
\bar{\lambda}\mu; & q_{i+1} = q_i-1 , q_i > 0 \\ 
{\lambda}{\bar{\mu}}; & q_{i+1} =  q_i+1 , q_i > 0 \\ 
1-{\lambda}{\bar{\mu}}-\bar{\lambda}\mu; & q_{i+1} = q_i , q_i > 0 \\ 
\end{array} \right.
\end{align}
If and only if $\lambda < \mu$, there exists a probability measure $\pi_{Q}$ on $\mathbb{N}_0$ that solves the system of equations 
\begin{align}
\sum_{q_i \in  \mathbb{N}_0} \pi_{Q}(q_i) P_{Q_{i+1}|Q_i}(q_{i+1}|q_i) = \pi_{Q}(q_{i+1})
\end{align}
for all $q_{i+1} \in \mathbb{N}_0$ and this measure is called the invariant measure. Note that for irreducible Markov chains the existence of such a probability measure is equivalent to positive recurrence. For the transition probabilities given it can be checked that
\begin{align}
 \pi_{Q}(q_i) = 
\left\{ \begin{array}{rl} 
\frac{\bar{\lambda}\mu-\lambda\bar{\mu}}{\mu}; q_i = 0 \\ 
\frac{\bar{\lambda}\mu-\lambda\bar{\mu}}{\bar{\mu} \mu} \rho^{q_i}; q_i > 0
\end{array} \right.
\end{align}
where we defined
\begin{align}
\rho = \frac{\lambda \bar{\mu}}{\bar{\lambda} \mu}.
\end{align}
Note that $\rho < 1 $ if and only if $\lambda < \mu$. In the remainder of the paper we will always assume that the arrival rate $\lambda$ is smaller than the serving rate $\mu$.
\begin{theorem}
The distribution on $\BS{X}^n$ that achieves capacity makes $Q_0$ distributed according to $\pi_{Q}$.
\end{theorem}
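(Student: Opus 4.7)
The plan is to reduce the remaining optimization to a concave maximization in $P_{Q_0}$ alone, and then to use a time-shift together with a Ces\`aro-averaging argument to identify $\pi_Q$ as a maximizer. By Theorem 1, I may fix the arrivals $\tilde{X}_1,\dots,\tilde{X}_{n-1}$ to be i.i.d.\ Bernoulli$(\lambda)$ and independent of $Q_0$. With the channel transition kernel held fixed, the mutual information
\[
 J(P_{Q_0}) \;:=\; I(Q_0,\tilde{X}^{n-1};\tilde{Y}^{n-1})
\]
is then a concave functional of $P_{Q_0}$, so the problem reduces to a concave maximization over probability measures on $\mathbb{N}_0$.

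Next I would introduce the one-step push-forward operator $T$ that maps any $P_{Q_0}$ to the distribution of $Q_1$ under an i.i.d.\ Bernoulli$(\lambda)$ arrival and the transition kernel defined above. Because $\lambda<\mu$, the queue chain is positive recurrent with unique invariant distribution $\pi_Q$, so the Ces\`aro averages $\bar{P}_K := \frac{1}{K}\sum_{k=0}^{K-1} T^k P_{Q_0}$ converge in total variation to $\pi_Q$ for any $P_{Q_0}$. The time-homogeneity of the channel and the independence of the arrivals from $Q_0$ let me identify $J(T^k P_{Q_0})$ with the mutual information of the length-$n$ sub-block starting at time $k$ inside a longer length-$(n+k)$ experiment, once the queue state at time $k$ is treated as the new ``initial'' state. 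Combined with concavity of $J$,
\[
 J(\bar{P}_K) \;\geq\; \frac{1}{K}\sum_{k=0}^{K-1} J(T^k P_{Q_0}),
\]
and passing to $K\to\infty$ using continuity of $J$ and $\bar{P}_K\to\pi_Q$ yields $J(\pi_Q)\geq J(P_{Q_0})$ for every $P_{Q_0}$, which is the claim.

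The main obstacle is the shift-identification step: establishing cleanly that $J(T^k P_{Q_0})$ really is the value of the same functional $J$ evaluated at the shifted initial distribution. The subtlety is the boundary contribution from the first $k$ outputs, which must be absorbed by conditioning on $(Q_0,\tilde{X}^k,\tilde{Y}^{k-1})$ and invoking the Markov property of the queue (given $Q_k$, the future is independent of the past). I expect this to be the technically heaviest step; an alternative is to invoke directly the corresponding argument in \cite{Bedekar98onthe}, which is already cited for Theorem 1 and establishes exactly this stationarity property of the capacity-achieving initial distribution for the same channel model.
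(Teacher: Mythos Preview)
The paper does not actually supply a proof of this theorem; it simply cites \cite{Bedekar98onthe}. Your closing suggestion --- to invoke that reference directly --- is therefore exactly what the paper does.

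Your attempted self-contained argument, however, has a genuine gap, and it is not the one you flag. The shift identification $J(T^k P_{Q_0}) = I(Q_k,\tilde{X}_{k+1}^{\,n+k-1};\tilde{Y}_k^{\,n+k-1})$ is unproblematic and follows from the Markov property and the i.i.d.\ nature of the arrivals, just as you outline. The real difficulty is the final implication. From concavity you correctly obtain
\[
 J(\bar{P}_K) \;\geq\; \frac{1}{K}\sum_{k=0}^{K-1} J(T^k P_{Q_0}),
\]
and from $\bar{P}_K \to \pi_Q$ together with continuity of $J$ you get $J(\bar{P}_K)\to J(\pi_Q)$. But this only yields
\[
 J(\pi_Q)\;\geq\;\liminf_{K\to\infty}\frac{1}{K}\sum_{k=0}^{K-1} J(T^k P_{Q_0}),
\]
and since $T^k P_{Q_0}\to\pi_Q$ as well, the right-hand side is again $J(\pi_Q)$ by the very same continuity. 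Nowhere do you obtain a comparison with $J(P_{Q_0})$: the single $k=0$ term carries weight $1/K\to 0$, and nothing in your argument forces $J(T^k P_{Q_0})\geq J(P_{Q_0})$ for $k\geq 1$. As written, the chain of inequalities is circular and does not establish that $\pi_Q$ is a maximizer.

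To salvage the approach you would need an additional ingredient --- for instance, showing that if $P^\star$ maximizes $J$ then $J(TP^\star)\geq J(P^\star)$ (hence equality, so every $T^k P^\star$ is also a maximizer, and therefore so is the limit $\pi_Q$). That monotonicity step is where the substantive work in \cite{Bedekar98onthe} lies; it typically uses a block-length comparison rather than concavity, and it is not supplied by Ces\`aro averaging alone.
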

\begin{proof}
The proof can be found in \cite{Bedekar98onthe}.
\end{proof}
And we will hence always assume that $Q_0$ is distributed according to $\pi_{Q}$. 

\newtheorem{definition}{Definition}
\begin{definition}
Let $0<\epsilon<1$. An $\epsilon$-Code of size $N$ is defined as a sequence $\{ (\BS{x}^{(i)}, \BS{D}^{(i)}), i = 1,\dots,N \} $ such that $\BS{x}^{(i)} \in \mathbb{X}^n$, $\BS{D}^{(i)} \subset  \mathbb{Y}^n$,  $\BS{D}^{(i)}$ are mutually disjoint and $P(\BS{D}^{(i)}|\BS{x}^{(i)}) > 1-\epsilon \ \forall i$. Let $N(\epsilon,n)$ be the supremum of the set of integers $N$ such that an $\epsilon$-Code of size $N$ exists. 
\end{definition}
Note that $\frac{\log N(\epsilon,n)}{n}$ equals the rate of the code.

\begin{theorem}
The expression for the capacity $C$ of the channel simplifies to
\begin{align}
C &= H(\lambda) - \frac{\lambda}{\mu} H(\mu)
\end{align}
and
\begin{align}
\log N(\epsilon,n) = nC + \epsilon\mathcal{O}(n)
\end{align}
\end{theorem}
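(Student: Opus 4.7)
The plan is to split the argument into (i) evaluating the mutual information rate in the stationary regime to obtain $C$, and (ii) lifting this to the finite blocklength statement through information--spectrum bounds.

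For the capacity formula, I would decompose $I(\BS{X}^n;\BS{Y}^n) = H(\BS{Y}^n) - H(\BS{Y}^n\mid\BS{X}^n)$ and evaluate each term under the stationary initial condition $Q_0\sim\pi_Q$. The conditional entropy is the easier piece: given $\BS{X}^n$ and $\tilde{Y}_0,\ldots,\tilde{Y}_{i-1}$ the queue length $Q_i$ is a deterministic function, while $\tilde{Y}_i\mid Q_i$ is Bernoulli$(\mu)$ when $Q_i>0$ and degenerate when $Q_i=0$. The chain rule combined with the stationary occupancy $\pi_Q(\{q:q>0\})=\lambda/\mu$ yields
\begin{align}
H(\BS{Y}^n\mid\BS{X}^n) = H(\mu)\sum_{i=0}^{n-1} P(Q_i>0) = n\,\tfrac{\lambda}{\mu}\,H(\mu).
\end{align}
For $H(\BS{Y}^n)$ I would invoke the discrete-time analog of Burke's theorem for the stationary Geo/Geo/1 queue, which says that in steady state the departure process has Bernoulli$(\lambda)$ marginals and the corresponding stationary entropy rate is $H(\lambda)$, so that $H(\BS{Y}^n) = nH(\lambda) + O(1)$. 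Combining the two terms gives $\tfrac{1}{n}I(\BS{X}^n;\BS{Y}^n) \to C$.

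For the finite blocklength bound I would apply Feinstein's lemma for achievability and the Verdu--Han information--spectrum converse for the upper bound, sandwiching $\log N(\epsilon,n)$ between quantiles of the information density $i(\BS{X}^n;\BS{Y}^n)$. The remaining task is to concentrate $i(\BS{X}^n;\BS{Y}^n)/n$ around $C$: writing the information density as a telescoping sum of log-likelihood increments along the positive recurrent Markov chain $(Q_i,\tilde{X}_i,\tilde{Y}_i)$, the ergodic theorem forces convergence in probability to $C$, and a tail estimate converts a quantile at level $\epsilon$ into an additive gap whose size is controlled by $\epsilon$ and grows at most linearly in $n$, giving the claimed $\epsilon\mathcal{O}(n)$ slack.

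The main obstacle I expect is the evaluation of $H(\BS{Y}^n)$: unlike the conditional entropy, the output has nontrivial correlations through the queue state, so one must either establish a Burke--style reversibility for the discrete-time queue or estimate the entropy rate by coupling to the stationary chain. Handling the boundary contributions from $Q_0$ and $\tilde{Y}_0$ so that they contribute only $O(1)$ rather than $O(n)$ is where the stationary initial distribution $Q_0\sim\pi_Q$, with its finite-entropy geometric tail, is essential; the same estimates feed directly into the concentration argument in the finite blocklength part.
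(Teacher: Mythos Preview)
The paper does not actually prove this theorem: its ``proof'' consists entirely of citations to Anantharam--Verd\'u (continuous time) and Bedekar (discrete time). So there is no in-paper argument to compare your proposal against line by line.

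That said, your sketch is essentially the standard route, and it dovetails with the auxiliary results the paper \emph{does} provide: Theorems~1 and~2 pin down the optimizing input, Theorem~\ref{theorem:burke} (Burke) handles $H(\BS{Y}^n)$, and Lemma~\ref{lemma:feinstein} (Feinstein) gives the achievability direction for the finite-blocklength claim. Two small corrections. First, under the paper's version of Burke the $\tilde{Y}_i$ are genuinely i.i.d.\ Bernoulli$(\lambda)$ once $Q_0\sim\pi_Q$, so $H(\BS{Y}^n)=nH(\lambda)$ exactly---no $O(1)$ slack and no need to talk about coupling to stationarity or entropy rates. Second, in this paper $Q_0$ is part of the channel \emph{input} $\BS{X}^n$, not a nuisance boundary term to be absorbed into $O(1)$; in particular its entropy sits on the $H(\BS{X}^n)$ side, not inside $H(\BS{Y}^n\mid\BS{X}^n)$, so your conditional-entropy computation is cleaner than you make it sound.

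The one place your proposal is genuinely underdeveloped is the $\epsilon\mathcal{O}(n)$ dependence in the second display. Convergence in probability of $i(\BS{X}^n;\BS{Y}^n)/n$ to $C$ (via the ergodic theorem) gives you $(1/n)\log N(\epsilon,n)\to C$ for each fixed $\epsilon$, but to extract an explicit $\epsilon$ prefactor on the $O(n)$ gap you need a quantitative tail bound, not just qualitative concentration. You gesture at this (``a tail estimate converts a quantile\ldots'') but do not say which one; a Markov-chain large deviations or the Berry--Esseen-type control in Theorem~\ref{theorem:BerryEsseen4MCs} would do the job, and indeed the paper later uses exactly that machinery to sharpen the bound to the $\sqrt{n}$ scale.
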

\begin{proof}
A proof of a similar result for the continuos  time case appeared in the landmark paper ``Bits through Queues'' \cite{Anantharam96} and the stated discrete time result was proved in \cite{Bedekar98onthe}.
\end{proof}

\begin{theorem}
\label{theorem:burke}
\emph{(Burke's Theorem)}
Given the queue $Q$ is in equilibrium and the random variables $\tilde{X}_i$, $i \in [n-1]$ are i.i.d. Bernoulli$(\lambda)$ then the random variables $\tilde{Y}_i$, $i \in [n-1] \cup \{ 0 \}$ are also i.i.d. Bernoulli$(\lambda)$.
\end{theorem}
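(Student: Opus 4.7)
The plan is a direct induction on $k$, proving the slightly stronger statement that $\tilde{Y}_0, \ldots, \tilde{Y}_{k-1}$ are i.i.d.\ Bernoulli$(\lambda)$ and are jointly independent of $Q_k$, with $Q_k \sim \pi_{Q}$. The base case $k = 0$ is exactly the equilibrium assumption $Q_0 \sim \pi_{Q}$, and the theorem follows by applying the statement at $k = n$.

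The engine of the induction is a single-step lemma: \emph{if $Q_k \sim \pi_{Q}$, then $\tilde{Y}_k$ and $Q_{k+1}$ are independent, with $\tilde{Y}_k \sim$ Bernoulli$(\lambda)$ and $Q_{k+1} \sim \pi_{Q}$}. I would prove this by expanding
\begin{equation*}
P(\tilde{Y}_k = y, Q_{k+1} = q') = \sum_{q \geq 0} \pi_{Q}(q)\, P(\tilde{Y}_k = y | Q_k = q)\, P(\tilde{X}_{k+1} = q' - q + y),
\end{equation*}
splitting on the handful of cases in which $q' - q + y \in \{0,1\}$, and checking that the result always collapses to $\lambda^y \bar{\lambda}^{1-y} \pi_{Q}(q')$. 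Every case reduces to a detailed-balance identity of the form $\pi_{Q}(q) P_{Q_{i+1}|Q_i}(q+1|q) = \pi_{Q}(q+1) P_{Q_{i+1}|Q_i}(q|q+1)$, which is immediate from the closed-form expression for $\pi_{Q}$ stated earlier in the excerpt.

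Given the lemma, the inductive step is a short measurability check. The pair $(\tilde{Y}_k, Q_{k+1})$ is a deterministic function of $Q_k$, a fresh service coin, and the fresh arrival $\tilde{X}_{k+1}$, all of which are independent of the randomness generating $\tilde{Y}_0, \ldots, \tilde{Y}_{k-1}$. The inductive hypothesis says that this latter tuple is already independent of $Q_k$, hence it is independent of $(\tilde{Y}_k, Q_{k+1})$; the lemma then splits this pair into its product form, upgrading the inductive statement from $k$ to $k+1$.

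The only real difficulty is the independence assertion in the lemma, rather than the marginal laws. Conceptually this is the reversibility content of Burke's theorem: $(Q_i)$ is a birth--death chain, it satisfies detailed balance with respect to $\pi_{Q}$, and so it is reversible in equilibrium, which suggests that the forward-time departure mechanism should match the arrival mechanism of the time-reversed chain. In continuous time this identification can be read off pathwise, but in discrete time an arrival and a departure can occur in the same slot, which blocks a clean symbolic reversal; the case-by-case detailed-balance verification above is the cleanest way to handle this.
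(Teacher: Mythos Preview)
Your proof is correct. The strengthened inductive statement (that $\tilde{Y}_0,\ldots,\tilde{Y}_{k-1}$ are i.i.d.\ Bernoulli$(\lambda)$ \emph{and} independent of $Q_k\sim\pi_Q$) is exactly the right bookkeeping device, and the single-step lemma reduces, as you say, to the detailed-balance relation $\pi_Q(q+1)\bar\lambda\mu=\pi_Q(q)\lambda\bar\mu$ for $q\ge 1$ together with the boundary identity $\pi_Q(1)\bar\lambda\mu=\pi_Q(0)\lambda$, both of which are immediate from the closed form of $\pi_Q$. The measurability step is also sound: since $(S_k,\tilde{X}_{k+1})$ is fresh and hence independent of $(Q_k,\tilde{Y}_0,\ldots,\tilde{Y}_{k-1})$, the inductive hypothesis $Q_k\perp(\tilde{Y}_0,\ldots,\tilde{Y}_{k-1})$ upgrades to mutual independence of the three blocks, which is what you use.

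As for comparison with the paper: the appendix section for this theorem is in fact empty---the paper only points to \cite{takagi93} and alludes to the continuous-time argument (time reversal of a reversible chain). Your write-up is therefore more, not less, than what the paper provides. Your closing remark is apt: in discrete time the possibility of a simultaneous arrival and departure makes a purely symbolic reversal argument awkward, and the direct detailed-balance check you outline is the cleanest route.
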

\begin{proof}
The proof is similar to the one for continuous time queues and can be found in \cite{takagi93}. A concise proof is also given in the appendix.
\end{proof}
By the above theorem $P(\BS{y}^n) = \prod_{i=0}^n P(\tilde{y}_i)$ and
\begin{align}
P(\tilde{y}_i) = \lambda \cdot  \tilde{y}_i + \bar{\lambda} \cdot \bar{\tilde{y}}_i.
\end{align}

The following lemma uses arguments introduced by Feinstein \cite{feinstein54} to give a lower bound on $N(\epsilon,n)$.
\newtheorem{lemma}{Lemma}
\begin{lemma}
\label{lemma:feinstein}
\emph{(Feinstein)}
$N(\epsilon,n) \geq e^{\theta} \left( \epsilon - P(i(\BS{x}^n,\BS{y}^n) \leq \theta) \right)$ for all $\theta \in \mathbb{R}$.
\end{lemma}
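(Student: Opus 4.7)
The plan is to adapt Feinstein's classical greedy construction of a codebook from high--information--density pairs. For every $\BS{x}^n \in \mathrm{X}^n$ define the threshold set
\[
F_{\BS{x}^n} = \{\BS{y}^n \in \mathrm{Y}^n : i(\BS{x}^n,\BS{y}^n) > \theta\},
\]
whose indicator is exactly the event appearing on the right-hand side. The key property, inherited from the definition of $i$, is that on $F_{\BS{x}^n}$ one has $P(\BS{y}^n) < e^{-\theta}P(\BS{y}^n|\BS{x}^n)$, which is what will convert a probability bound on $B$ below into a counting bound on the codebook size.

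Next I would build the code greedily. Start with the empty codebook and set $B_0 = \emptyset$. Having produced codewords $\BS{x}^{(1)},\dots,\BS{x}^{(k)}$ with disjoint decoding sets $\BS{D}^{(i)} = F_{\BS{x}^{(i)}} \setminus B_{i-1}$ and $B_k = \bigcup_{i\le k}\BS{D}^{(i)}$, search for any $\BS{x}^n$ with $P(F_{\BS{x}^n}\setminus B_k \mid \BS{x}^n) > 1-\epsilon$; if one exists, declare it $\BS{x}^{(k+1)}$ and iterate. Because $\mathrm{X}^n$ is countable and the sequence $P(B_k)$ is nondecreasing and bounded by $1$, the process terminates at some finite $N$. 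At termination, for \emph{every} $\BS{x}^n \in \mathrm{X}^n$,
\[
P(F_{\BS{x}^n}\setminus B_N \mid \BS{x}^n) \le 1-\epsilon.
\]

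Now I would average this inequality against the capacity-achieving input distribution $P_{X^n}$ used to define $i$. The left-hand side becomes $P(i(\BS{X}^n,\BS{Y}^n) > \theta) - P((\BS{X}^n,\BS{Y}^n)\in \mathrm{X}^n\times B_N)$, so
\[
P(i(\BS{X}^n,\BS{Y}^n) > \theta) \le 1-\epsilon + P(\BS{Y}^n \in B_N).
\]
To bound $P(\BS{Y}^n\in B_N)$ I use the defining property of $F_{\BS{x}^{(i)}}$ together with disjointness:
\[
P(B_N) = \sum_{i=1}^{N} P(\BS{D}^{(i)})
\le \sum_{i=1}^{N} e^{-\theta} P(\BS{D}^{(i)} \mid \BS{x}^{(i)})
\le Ne^{-\theta}.
\]

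Combining, and using $P(i\le \theta) = 1-P(i>\theta)$, yields $\epsilon - P(i(\BS{X}^n,\BS{Y}^n)\le\theta) \le Ne^{-\theta}$, i.e.\ $N \ge e^{\theta}(\epsilon - P(i(\BS{X}^n,\BS{Y}^n)\le \theta))$. Since $(\BS{x}^{(i)},\BS{D}^{(i)})_{i\le N}$ is by construction an $\epsilon$-code, $N(\epsilon,n) \ge N$, giving the claim. The only subtlety I anticipate is justifying termination and the measurability of the greedy selection: this is painless here because $\mathrm{X}^n = \mathbb{Z}_+\times\BS{2}^{n-1}$ is countable, all relevant $\sigma$-algebras are power sets, and $P(B_k)$ is monotone and bounded; the averaging step then only requires Tonelli, not any subtler interchange.
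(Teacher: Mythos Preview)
Your argument is correct and is essentially Feinstein's maximal-code construction, exactly as the paper reproduces it: define the threshold sets $F(\BS{x})$, take a maximal $\epsilon$-code with decoding regions contained in them, use $P(\BS{D}^{(i)})\le e^{-\theta}$ on each piece, and average the maximality condition against $P_{X^n}$. One small slip worth fixing: after averaging, the left side is $P(i>\theta)-P(\{i>\theta\}\cap\{\BS{Y}^n\in B_N\})$, not $P(i>\theta)-P(\BS{Y}^n\in B_N)$, but since the former dominates the latter your displayed inequality still follows (this is precisely the union-bound step $P(\BS{D}\cup F^c)\le P(\BS{D})+P(F^c)$ that the paper writes out).
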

\begin{proof}
The proof is short and elegant and reproduced in the appendix.
\end{proof}

\section{Finite-Length Scaling}

The distributions $P(\BS{y}^n|\BS{x}^n)$ and $P(\BS{y}^n)$ factor and hence
\begin{align}
i(\BS{x},\BS{y}) = \sum_{i=0}^{n-1} \log \frac{P(\tilde{y}_i|q_i)}{P(\tilde{y}_i)} =  \sum_{i=0}^{n-1} f(\tilde{y}_i,q_i)
\end{align}
where we defined
\begin{align}
f(\tilde{y}_i,q_i) = \log \frac{P(\tilde{y}_i|q_i)}{P(\tilde{y}_i)}.
\end{align}

The composed state $\psi_i = (\tilde{y}_i,q_i)$ again forms a positive recurrent Markov chain $\Psi = \{\Omega,\mathcal{F},\Psi_i,P_\Psi \}$ whose transition probabilities are illustrated in Figure \ref{2DMarkovModel}.
\begin{figure}[!hbtp]
\psfrag{Q=0,Yt=0}{$q=0,\tilde{y}=0$}
\psfrag{Q=1,Yt=0}{$q=1,\tilde{y}=0$}
\psfrag{Q=2,Yt=0}{$q=2,\tilde{y}=0$}
\psfrag{Q=1,Yt=1}{$q=1,\tilde{y}=1$}
\psfrag{Q=2,Yt=1}{$q=2,\tilde{y}=1$}
\psfrag{ldmd}{$\bar{\lambda}\bar{\mu}$}
\psfrag{lmd}{${\lambda}\bar{\mu}$}
\psfrag{ld}{$\bar{\lambda}$}
\psfrag{lm}{${\lambda}\mu$}
\psfrag{ldm}{$\bar{\lambda}\mu$}
\centering
\includegraphics[width = 0.98\columnwidth]{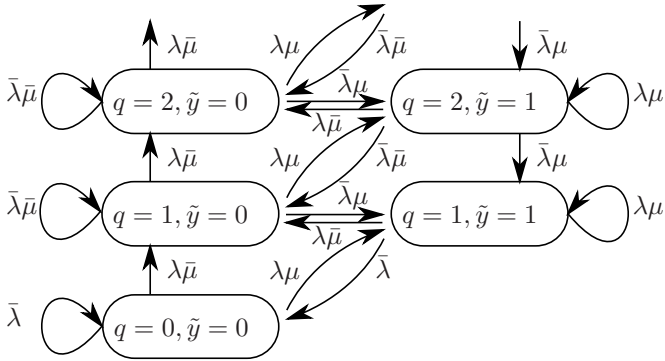}
\caption{Possible Transitions in the Markov Chain $(\tilde{Y},Q)$}
\label{2DMarkovModel}
\end{figure}
The invariant measure $\pi_\Psi$ for this chain is only a slight extension to $\pi_Q$:
\begin{align}
\pi_\Psi(\tilde{y},q) = P_{\tilde{Y}_{i}|Q_i}(\tilde{y}|q) \pi_Q(q) 
\end{align}
The proof of the following theorem is one of the main contributions of this paper because it can be used to proof bounds and an asymptotic on the quantity $N(\epsilon,n)$.
\begin{theorem}
\label{theorem:BerryEsseen4MCs}
The asymptotic variance 
\begin{align}
\label{eq:asymVar}
\sigma^2 = \lim_{n \to \infty} \frac{1}{n} \Var(\sum_{i=0}^{n-1} f(\tilde{y}_i,q_i))
\end{align}
is well defined, positive and finite, and
\begin{align}
\label{eq:asymVarianceRep}
\sigma^2 = \Var(f(\Psi_0)) + 2 \sum_{i=1}^\infty \Cov(f(\Psi_0),f(\Psi_i)).
\end{align}
Further the following Berry-Esseen type bound holds:
\begin{align}
\label{eq:BerryEsseen4MCs}
\sup_{\xi \in \mathbb{R}} \left| P \left(\frac{\sum_{i=0}^{n-1} f(\tilde{y}_i,q_i) - n \pi_\Psi(f)}{\sigma \sqrt{n}} \leq \xi \right) - \Phi(\xi) \right| \leq \mathcal{O}(n^{-1/2})
\end{align}
\end{theorem}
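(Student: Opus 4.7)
The plan is to exploit the regenerative structure of $\Psi$ at the singleton atom $\{q=0\}$ (which forces $\tilde{y}=0$, so the atom is $(0,0)$). Because $\rho<1$, the marginal chain $Q$ is geometrically ergodic: when $q>0$ its drift is $\lambda\bar\mu-\bar\lambda\mu<0$, so the return time $T$ to $q=0$ is stochastically dominated by the first passage to the origin of a nearest-neighbor random walk with strictly negative drift and consequently has a finite moment generating function in a neighborhood of $0$. First I let $0=\tau_0<\tau_1<\cdots$ be the successive visits of $Q$ to $0$, and define block lengths $T_k=\tau_k-\tau_{k-1}$ and block sums $B_k=\sum_{i=\tau_{k-1}}^{\tau_k-1}f(\tilde y_i,q_i)$. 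For $k\geq1$ the pairs $(T_k,B_k)$ are i.i.d.; since $f$ takes only the three values $-\log\bar\lambda$, $\log(\bar\mu/\bar\lambda)$ and $\log(\mu/\lambda)$, and $T_1$ has exponential moments, so does $B_1$, and in particular $(T_1,B_1)$ has moments of all orders.

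Next I would establish the variance identity. By stationarity of $\Psi$ under $\pi_\Psi$,
\begin{align*}
\Var\Bigl(\sum_{i=0}^{n-1}f(\Psi_i)\Bigr)= n\Var(f(\Psi_0))+2\sum_{k=1}^{n-1}(n-k)\Cov(f(\Psi_0),f(\Psi_k)).
\end{align*}
Geometric ergodicity (inherited from the exponential moment bound on $T$ via standard minorization/coupling arguments, cf.\ \cite{meyntwee09}) gives $|\Cov(f(\Psi_0),f(\Psi_k))|\leq C\kappa^k$ for some $\kappa\in(0,1)$. Dividing the display by $n$ and letting $n\to\infty$ then supplies both the existence of the limit (\ref{eq:asymVar}) and the identity (\ref{eq:asymVarianceRep}). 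The same regeneration moreover yields the alternative representation $\sigma^2=\mathbb{E}[(B_1-\pi_\Psi(f)T_1)^2]/\mathbb{E}[T_1]$, which I will invoke below.

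For the Berry-Esseen bound (\ref{eq:BerryEsseen4MCs}), set $N_n=\max\{k:\tau_k\leq n-1\}$ and decompose $\sum_{i=0}^{n-1}f(\Psi_i)=\sum_{k=1}^{N_n}B_k+R_n$, where $R_n$ absorbs the initial partial excursion before $\tau_1$ and the final partial excursion after $\tau_{N_n}$; since $T_1$ has exponential moments, $R_n$ is $O_p(1)$ with uniformly exponential tails and hence contributes only $O(n^{-1/2})$ to the Kolmogorov distance. A renewal Berry-Esseen bound (valid under exponential moments of $T$) gives $N_n=n/\mathbb{E}[T_1]+O_p(\sqrt n)$ with matching Kolmogorov error $O(n^{-1/2})$, and the classical i.i.d.\ Berry-Esseen applied to the centered block sums $\sum_{k=1}^m(B_k-\pi_\Psi(f)T_k)$ at $m\approx n/\mathbb{E}[T_1]$ supplies the central Gaussian estimate with error $O(m^{-1/2})=O(n^{-1/2})$. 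Splicing these three ingredients together in the style of the regenerative proof of the Markov chain CLT in \cite{meyntwee09} delivers the stated $\mathcal{O}(n^{-1/2})$ rate.

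The step I expect to be most delicate is the positivity of $\sigma^2$, since the covariance series in (\ref{eq:asymVarianceRep}) can in principle cancel and since the Berry-Esseen conclusion is vacuous without $\sigma^2>0$. The cleanest route is the regenerative representation: $\sigma^2=0$ would force $B_1=\pi_\Psi(f)T_1$ almost surely, i.e., $f$ to be a coboundary. I would rule this out by observing that conditional on the excursion length $T_1=t$, the number of departure instants with $\tilde y_i=1$ during the excursion is genuinely random, while $f(1,q)=\log(\mu/\lambda)$ differs from $\log(\bar\mu/\bar\lambda)=f(0,q)$ for $q>0$ whenever $\lambda\neq\mu$. Hence two excursions of the same duration produce different values of $B_1$ with positive probability, so $\Var(B_1\mid T_1)>0$ and therefore $\sigma^2>0$.
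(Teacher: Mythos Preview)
Your argument is correct, but it follows a genuinely different route from the paper's own proof.

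The paper proceeds via a Lyapunov drift criterion: it exhibits the explicit linear function $V(\tilde y,q)=(q-\tilde y)/(\mu-\lambda)$, verifies Foster's condition $\E[V(\Psi_{i+1})-V(\Psi_i)\mid\Psi_i=\psi]\leq -1$ off a finite set, observes that $V$ is Lipschitz and that the increments of $\Psi$ are bounded (the chain is skip-free), and then invokes the machinery of \cite{spieksma1994,CTCNMeyn2007} to conclude geometric ergodicity.  From there it cites the Edgeworth expansion of Kontoyiannis--Meyn \cite{Kontoyiannis01spectraltheory} for bounded nonlattice functionals of geometrically ergodic chains, which immediately delivers the $\mathcal{O}(n^{-1/2})$ Kolmogorov bound after averaging over the initial state.

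You instead exploit the atom at $(0,0)$ directly: exponential return-time moments via random-walk domination, i.i.d.\ regeneration blocks $(T_k,B_k)$, and then a Bolthausen-style splicing of the classical i.i.d.\ Berry--Esseen on centered block sums with a renewal estimate for $N_n$.  This is more hands-on but buys you two things the paper's argument does not supply.  First, the paper's route through \cite{Kontoyiannis01spectraltheory} requires a nonlattice hypothesis on $f$ that the authors explicitly flag as unchecked; your regenerative argument needs only finite third moments of $(T_1,B_1)$, which you obtain from the exponential tails, so the nonlattice issue disappears.  Second, the cited variance theorem yields only $\sigma^2\geq 0$; you close the positivity gap with the coboundary argument, showing $\Var(B_1\mid T_1)>0$ because excursions of the same length can contain different numbers of departures and $f(1,q)-f(0,q)=\log(\mu\bar\lambda/(\lambda\bar\mu))\neq 0$ for $\lambda<\mu$.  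One small inconsistency: you set $\tau_0=0$ but the chain is started in stationarity, so $Q_0$ need not be $0$; just redefine $\tau_0$ as the first hitting time of $0$ (which inherits the exponential tail) and your remainder term $R_n$ already absorbs this.
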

\begin{proof}
A detailed proof can be found in appendix. We only give a sketch here. The Markov Chain $\Psi$ is aperiodic and irreducible. The state space of $\Psi_i$ can be chosen to be $\mathbb{X} = \BS{2} \times \mathbb{N} \cup \{ (0,0) \}$. First we verify that there exists a Lyapunov function $V: \mathbb{X} \to (0,\infty]$, finite at some $\psi_0 \in \mathbb{X}$, a finite set $\mathbb{S} \subset \mathbb{X}$, and $b < \infty$ such that
\begin{align}
\E [ V(\Psi_{i+1}) - V(\Psi_{i})|\Psi_{i} = \psi ] \leq -1 + b \BS{1}_{\mathbb{S}}(\psi), \quad \psi \in \mathbb{X}
\end{align}
The chain is skip-free and the found Lyapunov function is linear and hence also Lipschitz. These properties imply that the chain is geometric ergodic \cite{spieksma1994,CTCNMeyn2007,meyntwee09} and the bound in Equation \ref{eq:BerryEsseen4MCs} hence holds by arguments made in \cite{Kontoyiannis01spectraltheory}.
\end{proof}

An explicit solution to the asymptotic variance of a general irreducible positive recurrent Markov chain is not available. Significant research in the area of steady-state stochastic simulation has focused on obtaining an expression for this quantity \cite{whitt1991, burman1980, KemenySnell1960} and has yielded a closed form solution for the class of homogeneous birth-death processes when $f(\psi_i)$ simply returns the integer valued state itself. 

We build up on an idea introduced in \cite{grassmann1987} to give an explicit closed form solution to the asymptotic variance in Equation \ref{eq:asymVar}.
\begin{theorem}
\label{theorem:asymVariance}
The closed form expressions in Equation \ref{eq:asymVarExp} equals  the asymptotic variance defined in Equation \ref{eq:asymVar}.
\end{theorem}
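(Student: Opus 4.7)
The plan is to attack the closed-form identity through the covariance representation in Equation \ref{eq:asymVarianceRep}, using the Poisson/fundamental-matrix approach of \cite{grassmann1987}. The starting observation is that $f$ takes only three distinct values on the support of $\pi_\Psi$: $f(0,0)=\log(1/\bar{\lambda})$, $f(0,q)=\log(\bar{\mu}/\bar{\lambda})$ for $q>0$, and $f(1,q)=\log(\mu/\lambda)$ for $q>0$. So $\Var_{\pi_\Psi}(f)$ can be written down immediately from the explicit form of $\pi_\Psi$, and the entire difficulty is concentrated in evaluating $\sum_{i\geq 1} \Cov(f(\Psi_0),f(\Psi_i))$.

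The first step is to reduce this infinite sum to a single inner product by introducing the centered Poisson solution $g:\mathbb{X}\to\mathbb{R}$ satisfying $g(\psi)-\sum_{\psi'}P_\Psi(\psi'|\psi)g(\psi')=f(\psi)-\pi_\Psi(f)$ with $\pi_\Psi(g)=0$; geometric ergodicity (established in the proof of Theorem \ref{theorem:BerryEsseen4MCs}) guarantees existence and uniqueness of such a $g$, and standard Markov-chain arguments then give
\begin{align}
\sigma^2 \;=\; \Var_{\pi_\Psi}(f) \;+\; 2\sum_{\psi\in\mathbb{X}} \pi_\Psi(\psi)\,\bigl(f(\psi)-\pi_\Psi(f)\bigr)\,g(\psi).
\end{align}
The second step, which is the Grassmann-style core of the argument, is to exploit the birth-death structure of the $Q$-coordinate. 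On the interior $q\geq 1$ the Poisson equation becomes a second-order linear recurrence in $q$ with constant coefficients, decoupled across the two values of $\tilde{y}$ once the $q+1,q,q-1$ neighbours are collected; its characteristic polynomial has roots $1$ and $\rho=\lambda\bar{\mu}/(\bar{\lambda}\mu)$, and boundedness of $g$ forces the bounded interior solution to take the form $g(\tilde{y},q)=A_{\tilde{y}}+B_{\tilde{y}}\rho^{q}$ for $q\geq 1$, with four unknown constants. The third step determines these constants and $g(0,0)$ from the boundary equation at $q=0$, the two matching equations at $q=1$, and the normalization $\pi_\Psi(g)=0$. Substituting the resulting $g$ into the inner product and collapsing the geometric sums in $q$ using $\sum_{q\geq 1}\rho^q=\rho/(1-\rho)$ and $\sum_{q\geq 1}\rho^{2q}=\rho^2/(1-\rho^2)$ yields the claimed closed form \ref{eq:asymVarExp}.

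The main obstacle is the second step: the boundary at $q=0$ breaks the translational symmetry in $q$, and because state $(1,0)$ is absent from $\mathbb{X}$ the system of transition equations at $q=0$ and $q=1$ is asymmetric in $\tilde{y}$. The coefficients $A_{\tilde{y}}$ and $B_{\tilde{y}}$ must be pinned down by a carefully written $5\times 5$ linear system whose right-hand side involves $f-\pi_\Psi(f)$ at the boundary, and a minor pitfall is that $g$ is only determined up to an additive constant before normalization, so one must verify that the additive constant drops out of the weighted inner product (it does, since $\pi_\Psi(\bar{f})=0$). Once this bookkeeping is dispatched, the remaining algebra is pure geometric-series manipulation and does not involve any further non-trivial Markov-chain argument.
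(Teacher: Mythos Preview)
Your high-level strategy (Poisson equation plus Grassmann-style recursion in $q$) is essentially the paper's, and the reduction to the inner product $\sum_\psi \pi_\Psi(\psi)\bar f(\psi)g(\psi)$ is correct. The concrete execution of your ``second step'', however, contains a genuine error that would derail the computation.

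For the \emph{forward} Poisson equation $(I-P_\Psi)g=\bar f$ the interior recursion (obtained by writing the equation at $(0,q)$ and $(1,q)$ with $q\ge 2$) has characteristic roots $1$ and $1/\rho$, not $1$ and $\rho$; the quadratic you should obtain is $\lambda\bar\mu\,z^{2}-(\mu\bar\lambda+\lambda\bar\mu)\,z+\mu\bar\lambda=0$, whose roots multiply to $1/\rho$. The unstable mode $(1/\rho)^{q}$ must be discarded, but since the forcing $\bar f(\tilde y,q)$ is constant in $q$ for $q\ge 1$ and $z=1$ is a simple root, resonance forces the particular solution to be \emph{linear} in $q$. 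Thus the correct interior form is
\[
g(\tilde y,q)=\alpha_{\tilde y}+\beta\,q\qquad(q\ge 1),
\]
not $A_{\tilde y}+B_{\tilde y}\rho^{q}$, and $g$ is not bounded. (Equivalently, the paper works with $R(\psi)=\sum_{i\ge 0}\E_\pi[\bar f(\Psi_0)\mathbf 1\{\Psi_i=\psi\}]$, which solves the adjoint equation with root $\rho$; there the forcing itself is proportional to $\rho^{q}$, so resonance again appears and one gets $R(q,0)=c_{\tilde M}\,q\rho^{q}+R(0,0)\rho^{q}$, still not of your claimed shape.)

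Two downstream consequences: (i) the ``boundedness of $g$'' argument is false and cannot be used to select the mode; you must instead invoke the $V$-norm bound on the Poisson solution coming from geometric ergodicity, which allows linear growth in $q$; (ii) the final sums involve $\sum_{q\ge 1} q\rho^{q}=\rho/(1-\rho)^{2}$ rather than $\sum_{q}\rho^{2q}$. Once you replace the ansatz by the linear one and redo the boundary/normalization system, the computation goes through and matches \eqref{eq:asymVarExp}.
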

\begin{proof}
Again we only sketch the proof here and refer to the appendix for a detailed version. For the computation of the sum $\sum_{i=0}^\infty \Cov(f(\Psi_0),f(\Psi_i))$ we will setup and solve a recursion. Grassmann proposed this approach in \cite{grassmann1987} to obtain the asymptotic variance of a continuous time finte state birth death process. 

We define
\begin{align}
r(\psi,i) = \sum_{\psi' \in \mathbb{X}} (f(\psi')-C) \pi_\Psi(\psi') p_{\Psi_{i}| \Psi_0}(\psi|\psi')
\end{align}
Clearly
\begin{align}
r(\psi,0) = (f(\psi)-C)  \pi_\Psi(\psi)
\end{align}
and
\begin{align}
\Cov(f(\Psi_0),f(\Psi_i)) = \sum_{\psi \in \mathbb{X}} (f(\psi)-C) r(\psi,i) =  \sum_{\psi \in \mathbb{X}} f(\psi) r(\psi,i)
\end{align}
Note however that for the computation of the asymptotic variance we actually do not even need to know this covariance for each $i$. It is sufficient to know its sum. So we define 
\begin{align}
R(\psi) = \sum_{i=0}^\infty r(\psi,i)
\end{align}
write
\begin{align}
\label{eq:rec-1}
\sum_{i=0}^\infty \Cov(f(\Psi_0),f(\Psi_i)) = \sum_{\psi \in \mathbb{X}} f(\psi) R(\psi)
\end{align}
and derive an expression for $R(\psi)$.
\end{proof}
\begin{figure*}[h!]
\begin{align}
\label{eq:asymVarExp}
\sigma^2 &=  -\Var(f(\Psi_0)) + 2 \sum_{i=0}^\infty \Cov(f(\Psi_0),f(\Psi_i)) \\
\Var(f(\Psi_0))  &= \log^2(\frac{1}{\bar{\lambda}})\pi_Q(0) + \log^2(\frac{\mu}{\lambda})\mu\overline{\pi_Q(0)}  + \log^2(\frac{{\bar{\mu}}}{\bar{\lambda}}){\bar{\mu}}\overline{\pi_Q(0)} - C^2 \\
c_{M0} &= \frac{\bar{\lambda}}{\bar{\mu}} \left( \mu\log(\frac{\mu}{\lambda})  +  {\bar{\mu}} \log(\frac{{\bar{\mu}}}{\bar{\lambda}}) - C \right) \\
c_{\tilde{M}}  &= \left\{  \frac{c_{M0}}{\mu} + (C - \log\frac{\mu}{\lambda} )  \frac{\pi_Q(0)}{\bar{\mu}} \right\} \\
\label{eq:sumCovExp}
\sum_{i=0}^\infty \Cov(f(\Psi_0),f(\Psi_i)) &= 
\log \frac{1}{\bar{\lambda}}  (-c_{\tilde{M}} \frac{\rho}{ 1 -  \rho } - c_{M0} \rho)
+ \log \frac{\mu}{\lambda} c_{M0} \frac{\rho}{ 1 -  \rho } +
+ \log \frac{\bar{\mu}}{\bar{\lambda}} \frac{\rho}{ 1 -  \rho } (c_{\tilde{M}} - \rho c_{M0})
\end{align}
\end{figure*}


Using the result stated in Theorem \ref{theorem:BerryEsseen4MCs} we can finally prove the core contribution of this paper:
\begin{theorem}
\label{theorem:achievability}
\begin{align}
\log N(n,\epsilon) \geq nC - \sqrt{n} \sigma Q^{-1}(\epsilon) - \frac{1}{2}\log{n} + O(1)
\end{align}
where $C = \pi_\Psi(f)$ and $\sigma$ is defined as in Theorem \ref{theorem:BerryEsseen4MCs}.
\end{theorem}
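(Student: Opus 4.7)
The plan is to combine Feinstein's lemma (Lemma~\ref{lemma:feinstein}) with the Berry-Esseen bound of Theorem~\ref{theorem:BerryEsseen4MCs}. Feinstein gives
\begin{align*}
\log N(\epsilon,n) \;\geq\; \theta + \log\!\bigl(\epsilon - P(i(\BS{x}^n,\BS{y}^n) \leq \theta)\bigr)
\end{align*}
for every $\theta \in \mathbb{R}$, so the game is to pick $\theta$ as large as possible while keeping the bracketed quantity bounded below by a positive multiple of $n^{-1/2}$, which is the smallest it can be before the $\log$ term wipes out more than $\tfrac{1}{2}\log n$.

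I would set $\theta_n = nC - \sigma\sqrt{n}\,Q^{-1}(\epsilon) - A$ with $A$ a constant to be fixed later, and let $\xi_n = (\theta_n - nC)/(\sigma\sqrt{n}) = -Q^{-1}(\epsilon) - A/(\sigma\sqrt{n})$. Theorem~\ref{theorem:BerryEsseen4MCs} furnishes a constant $B>0$ with
\begin{align*}
P\bigl(i(\BS{x}^n,\BS{y}^n) \leq \theta_n\bigr) \;\leq\; \Phi(\xi_n) + B/\sqrt{n}.
\end{align*}
Because $\Phi(-Q^{-1}(\epsilon)) = \epsilon$ and $\Phi$ is smooth with $\Phi'(-Q^{-1}(\epsilon))=\phi(Q^{-1}(\epsilon))$, a first-order Taylor expansion gives $\Phi(\xi_n) = \epsilon - \phi(Q^{-1}(\epsilon))\,A/(\sigma\sqrt{n}) + O(1/n)$, whence
\begin{align*}
\epsilon - P(i \leq \theta_n) \;\geq\; \frac{1}{\sqrt{n}}\!\left(\frac{\phi(Q^{-1}(\epsilon))A}{\sigma} - B\right) + O(1/n).
\end{align*}
Choosing $A > B\sigma/\phi(Q^{-1}(\epsilon))$ makes the right-hand side strictly positive and of order $n^{-1/2}$, so its logarithm equals $-\tfrac{1}{2}\log n + O(1)$. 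Plugging $\theta_n$ back into Feinstein's lemma,
\begin{align*}
\log N(\epsilon,n) \;\geq\; nC - \sigma\sqrt{n}\,Q^{-1}(\epsilon) - A - \tfrac{1}{2}\log n + O(1),
\end{align*}
and the constant $A$ is absorbed into the $O(1)$.

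The main obstacle is essentially already discharged by Theorem~\ref{theorem:BerryEsseen4MCs}: the hard work is the uniform $O(n^{-1/2})$ central-limit rate for the Markov-dependent information-density sum $\sum f(\tilde{y}_i,q_i)$. Once that rate is available, the argument above is a routine Feinstein-plus-Taylor tuning, whose only delicacy is verifying that the slack constants $Q^{-1}(\epsilon)$ and $\phi(Q^{-1}(\epsilon))$ are finite and nonzero for each fixed $\epsilon \in (0,1)$, so that the implicit constants in $O(1)$ depend on $\epsilon$ but not on $n$.
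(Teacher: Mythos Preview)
Your proposal is correct and follows essentially the same strategy as the paper: apply Feinstein's lemma with a threshold $\theta$ placed $O(1)$ below $nC-\sigma\sqrt{n}\,Q^{-1}(\epsilon)$, use the Berry--Esseen bound of Theorem~\ref{theorem:BerryEsseen4MCs} to show $\epsilon-P(i\le\theta)$ is a positive multiple of $n^{-1/2}$, and take logarithms. The only cosmetic difference is that the paper parametrizes the perturbation via $\xi_1=\Phi^{-1}(\epsilon-B/\sqrt{n})$ and Taylor-expands $\Phi^{-1}$, whereas you perturb $\theta$ by a fixed constant $A$ and Taylor-expand $\Phi$; these are dual and yield the same estimate.
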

\begin{proof}
By Theorem \ref{theorem:BerryEsseen4MCs} $\exists A>0:$
\begin{align}
|P((i(\BS{x},\BS{y})-nC)/\sqrt{n \sigma^2} \leq \xi_1) - \Phi(\xi_1)| \leq \frac{A}{\sqrt{n}} \ \forall \xi_1 \in \mathbb{R}
\end{align}
Let $B>A$ and $\xi_1 = \Phi^{-1}(\epsilon - \frac{B}{\sqrt{n}}) < \Phi^{-1}(\epsilon) =  \xi_0$. 
Set $\theta = \sqrt{n} \sigma \xi_1 + nC$ and the application of Lemma \ref{lemma:feinstein} yields
\begin{align}
&\log N(n,\epsilon) - nC - \sqrt{n} \sigma \xi_0 \nonumber \\ 
\geq &\log\left(  \epsilon - P\left(\frac{i(\BS{x},\BS{y})-nC}{\sqrt{n} \sigma } \leq \xi_1\right) \right) + \sqrt{n}\sigma(\xi_1 - \xi_0)\\
\geq &\log\left(  \epsilon -  \Phi(\xi_1) - \frac{A}{\sqrt{n}} \right) + \sqrt{n}\sigma O(\frac{1}{\sqrt{n}}) 
\end{align}
\end{proof}
This confirms that $C$ is the operational capacity of the channel and any rate $R < C$ is achievable. The real beauty of Theorem \ref{theorem:achievability} is, however, that we can use the asymptotic
\begin{align}
\frac{\log N(n,\epsilon)}{n} \sim C - n^{-1/2} \sigma Q^{-1}(\epsilon)
\end{align}
as an approximation to the channel coding rate $\frac{\log N(n,\epsilon)}{n}$ and then anticipate the achievable rate on this channel in the finite block length regime. For illustraton we plotted this asymptotic for blocklengths ranging between $50$ and $3000$ and the example values $\lambda = 0.2$, $\mu = 0.8$ and $\epsilon = 10^{-5}$ in Figure \ref{fig:achievableCodingRate}.
\begin{figure}[!ht]
\psfrag{blocklength n}{blocklength $n$}
\psfrag{achievable coding rate}{achievable coding rate}
\centering
\includegraphics[width = 0.9\columnwidth]{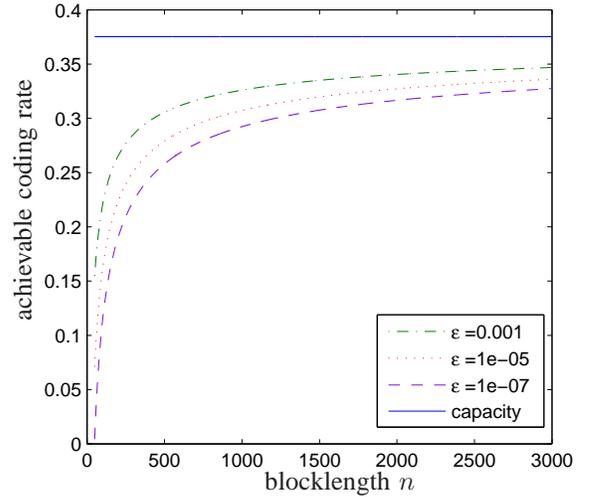}
\caption{Channel Coding Rate in the Finite Block-Length Regime}
\label{fig:achievableCodingRate}
\end{figure}


\section{Proof of Theorem \ref{theorem:burke}}

\section{Proof of Lemma \ref{lemma:feinstein}}
\begin{proof}
Assume $N$ is the maximal size of an $\epsilon$-Code such that $\BS{D}^{(i)} \subset F(\BS{x}^{(i)})$, where
\begin{align}
F(\BS{x}) = \{ \BS{y}: i(\BS{x},\BS{y}) > \theta \}.
\end{align}
Then we have
\begin{align}
P(\BS{D}^{(i)}) = \int_{\BS{D}^{(i)}} P(d\BS{y}) <  \int_{\BS{D}^{(i)}} e^{-\theta} P(d\BS{y}|\BS{x}) \leq e^{-\theta}
\end{align}
and
\begin{align}
P(\cup_i \BS{D}^{(i)}) \leq \sum_i P(\BS{D}^{(i)}) \leq Ne^{-\theta}
\end{align}
Let $\BS{D} = \cup_i \BS{D}^{(i)}$. By the maximality of N it follows that
\begin{align}
P(\BS{D}^c\cap F(\BS{x})|\BS{x}) < 1 - \epsilon.
\end{align}
Or equivalently
\begin{align}
\epsilon < P(\BS{D} \cup F^c(\BS{x})|\BS{x}) \leq P(\BS{D}|\BS{x}) + P(F^c(\BS{x})|\BS{x})
\end{align}
Multiplying this inequality with $P(d\BS{y})$ and integrating it over $\BS{x}$ then yields
\begin{align}
\epsilon \leq P(\BS{D}) + P(i(\BS{x},\BS{y}) \leq \theta)
\end{align}
Putting everything together we obtain the result
\begin{align}
\epsilon - P(i(\BS{x},\BS{y}) \leq \theta) \leq P(\BS{D}) \leq Ne^{-\theta}
\end{align}
\end{proof}

\section{Proof of Theorem \ref{theorem:BerryEsseen4MCs}}
The Markov Chain $\Psi$ is aperiodic and irreducible. The state space of $\Psi_i$ can be chosen to be $\mathbb{X} = \BS{2} \times \mathbb{N} \cup \{ (0,0) \}$. First we verify that Foster's Criterion holds
\begin{lemma}
There exists  a Lyapunov  function $V: \mathbb{X} \to (0,\infty]$, finite at some $\psi_0 \in \mathbb{X}$, a finite set $\mathbb{S} \subset \mathbb{X}$, and $b < \infty$ such that
\begin{align}
\E [ V(\Psi_{i+1}) - V(\Psi_{i})|\Psi_{i} = \psi ] \leq -1 + b \BS{1}_{\mathbb{S}}(\psi), \quad \psi \in \mathbb{X}
\end{align}
Further this function $V$ is Lipschitz, i.e., for some $\alpha > 0$
\begin{align}
|V(y) - V(x)| \leq \alpha ||y-x|| \quad \forall y,x \in \mathbb{X}
\end{align}
and for some $\beta > 0$ and
\begin{align}
\sup_{x \in \mathbb{X}} \E[e^{\beta ||\Psi_{i+1} - \Psi_i ||}| \Psi_i = x] < \infty
\end{align}
\end{lemma}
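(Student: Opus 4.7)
The plan is to exhibit an explicit Lyapunov function of the form $V(\tilde{y}, q) = cq + g(\tilde{y}) + K$, where the slope $c > 0$, the correction $g : \BS{2} \to \mathbb{R}$, and the shift $K$ are chosen so that Foster's criterion holds; the remaining two properties then follow almost immediately from the skip-free nature of $\Psi$.

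First I would partition $\mathbb{X}$ into the finite boundary set $\mathbb{S} = \{(0,0),(1,1)\}$ and the interior, where $q - \tilde{y} \geq 1$. Using the transition probabilities illustrated in Figure \ref{2DMarkovModel}, a direct computation shows that at an interior state $(0,q)$ one has $Q_{i+1} = q + \tilde{X}_{i+1}$ and $\tilde{Y}_{i+1}$ is Bernoulli$(\mu)$, giving one-step drift $c\lambda + \mu\Delta$ with $\Delta := g(1) - g(0)$. At an interior state $(1,q)$ with $q \geq 2$ the drift evaluates to $-c\bar{\lambda} - \bar{\mu}\Delta$. Requiring both expressions to be $\leq -1$ yields the two linear inequalities $\mu\Delta \leq -1 - c\lambda$ and $\bar{\mu}\Delta \geq 1 - c\bar{\lambda}$, which are jointly solvable if and only if $c(\mu - \lambda) \geq 1$. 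Since we have assumed $\lambda < \mu$, one picks $c$ large enough and then any $\Delta$ in the resulting nonempty closed interval; $K$ is finally chosen to render $V$ strictly positive on $\mathbb{X}$. The two boundary states in $\mathbb{S}$ contribute only a finite excess drift, absorbed by the constant $b$ in front of $\BS{1}_{\mathbb{S}}$.

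The Lipschitz property is then essentially free: $V$ is linear in $q$ with slope $c$ and takes only two values across $\tilde{y}$, so $|V(y) - V(x)| \leq (c + |\Delta|)\|y - x\|$ follows from the triangle inequality with $\alpha = c + |\Delta|$. For the uniform exponential jump moment, observe that $\Psi$ is skip-free in each coordinate: $|Q_{i+1} - Q_i| = |\tilde{X}_{i+1} - \tilde{Y}_i| \leq 1$ and $|\tilde{Y}_{i+1} - \tilde{Y}_i| \leq 1$ almost surely, so $\|\Psi_{i+1} - \Psi_i\| \leq 2$ deterministically and $\E[e^{\beta \|\Psi_{i+1} - \Psi_i\|} | \Psi_i = x] \leq e^{2\beta}$ holds trivially for every $\beta > 0$ uniformly in $x$.

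The main obstacle is identifying the correct form of $V$: the naive choice $V(\tilde{y}, q) = q$ fails because conditioning on $\tilde{Y}_i = 0$ forces a strictly positive one-step drift of $+\lambda$ regardless of $q$, so the queue-length component alone is not a Lyapunov function for the enlarged chain $\Psi$. Introducing the correction $g(\tilde{y})$ acts as a potential that reweights the two $\tilde{Y}$-slices, and the compatibility condition $c(\mu - \lambda) \geq 1$ reveals the stability gap $\mu - \lambda$ as precisely the quantity that makes the construction work. Once $V$ is in hand, the remainder of the argument is careful bookkeeping of the five transitions shown in Figure \ref{2DMarkovModel}.
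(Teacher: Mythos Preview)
Your proposal is correct and follows essentially the same approach as the paper: both recognize that $V(\tilde y,q)=cq$ alone fails because the drift at states with $\tilde y=0$ is $+c\lambda>0$, and both repair this by adding a correction term depending on $\tilde y$. The paper simply picks the extremal member of your family, $V(\tilde y,q)=(q-\tilde y)/(\mu-\lambda)$, corresponding to $c=1/(\mu-\lambda)$ and $\Delta=-c$, at which your compatibility interval for $\Delta$ collapses to a single point and the drift equals $-1$ exactly on the interior; your derivation of the constraint $c(\mu-\lambda)\ge 1$ makes explicit why this particular scaling is the right one.
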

\begin{proof}
We need to find a function $V$ such that $\E [ V(\Psi_{i+1}) - V(\Psi_{i})|\Psi_{i} = \psi ] \leq -1$ for all but a finite number of $\psi \in \mathbb{X}$. If we simply choose $V(\tilde{y},q) = c q$ for some sufficiently large constant $c>0$ then the requirement is clearly satisfied for all $\psi \in \mathbb{X}$ such that $\tilde{y} = 1$ but it fails to hold otherwise. To fix this shortcoming we reward the transitions to a state with $\tilde{y} = 1$ by a decreasing difference $V(\Psi_{i+1}) - V(\Psi_{i})$. In particular we choose $V(\tilde{y},q) = (q - \tilde{y})/(\mu - \lambda)$. Standard calculations reveal that for that choice $\E [ V(\Psi_{i+1}) - V(\Psi_{i})|\Psi_{i} = \psi ] = -1$ for all $\psi \in \mathbb{X}$ with $ q > 1$. Linear functions are always Lipschitz and  $||\Psi_{i+1} - \Psi_i ||$ is bounded almost surely.
\end{proof}
By the results in \cite{spieksma1994} or Proposition A.5.7. in \cite{CTCNMeyn2007} the chain $\Psi$ is then geometrically ergodic. 

By Theorem A.5.8. in \cite{CTCNMeyn2007} the asymptotic variance $\sigma^2$ is well defined, non-negative and finite, and
\begin{align}
\sigma^2 = \Var(f(\Psi_0)) + 2 \sum_{i=1}^\infty \Cov(f(\Psi_0),f(\Psi_i))
\end{align}
Finally, $f(\tilde{y},q)$ is a bounded, nonlattice (I still have to check that!), real-valued functional on the state space $\mathbb{X}$ and hence
\begin{align}
&P_{\psi_0} \left(\frac{\sum_{i=0}^{n-1} f(\tilde{y}_i,q_i) - n \pi_\Psi(f)}{\sigma \sqrt{n}} \leq \xi \right) - \Phi(\xi) \\
=& \frac{p_\Phi (\xi)}{\sigma \sqrt{n}} \left[ \frac{\eta}{6 \sigma^2}(1-\xi^2) - \hat{f}(\psi_0)  \right] +{o}(n^{-1/2})
\end{align}
where $p_\Phi (\xi)$ denotes the density of the standard Normal distribution $\Phi$, $\hat{f}$ is the solution to Poissons equation and $\eta$ is a constant \cite{Kontoyiannis01spectraltheory}. The solution $\hat{f}$ can be chosen such that $\pi_\Psi(\hat{f}) = 0$ and the claim follows by averaging out $\psi_0$.

\section{Proof of Theorem \ref{theorem:asymVariance}}

Using the representation of $\sigma^2$ in Equation \ref{eq:asymVarianceRep} it remains to find explicit expressions for  $\Var(f(\Psi_0))$ and the sum $\sum_{i=0}^\infty \Cov(f(\Psi_0),f(\Psi_i))$.

The term $\Var(f(\Psi_0))$ is easy to compute
\begin{align}
\Var(f(\Psi_0))  = & \log^2(\frac{1}{\bar{\lambda}})\pi_Q(0) + \log^2(\frac{\mu}{\lambda})\mu\overline{\pi_Q(0)} \nonumber \\
& + \log^2(\frac{{\bar{\mu}}}{\bar{\lambda}}){\bar{\mu}}\overline{\pi_Q(0)} - C^2
\end{align}
Equation \ref{eq:sumCovExp} holds true.
\begin{proof}
For the computation of the sum $\sum_{i=0}^\infty \Cov(f(\Psi_0),f(\Psi_i))$ we will setup and solve a recursion. Grassmann proposed this approach in \cite{grassmann1987} to obtain the asymptotic variance of a continuous time finte state birth death process. 

We define
\begin{align}
r(\psi,i) = \sum_{\psi' \in \mathbb{X}} (f(\psi')-C) \pi_\Psi(\psi') p_{\Psi_{i}| \Psi_0}(\psi|\psi')
\end{align}
Clearly
\begin{align}
r(\psi,0) = (f(\psi)-C)  \pi_\Psi(\psi)
\end{align}
and
\begin{align}
\Cov(f(\Psi_0),f(\Psi_i)) = \sum_{\psi \in \mathbb{X}} (f(\psi)-C) r(\psi,i) =  \sum_{\psi \in \mathbb{X}} f(\psi) r(\psi,i)
\end{align}
Note however that for the computation of the asymptotic variance we actually do not even need to know this covariance for each $i$. It is sufficient to know its sum. So we define 
\begin{align}
R(\psi) = \sum_{i=0}^\infty r(\psi,i)
\end{align}
write
\begin{align}
\label{eq:rec-1}
\sum_{i=0}^\infty \Cov(f(\Psi_0),f(\Psi_i)) = \sum_{\psi \in \mathbb{X}} f(\psi) R(\psi)
\end{align}
and derive a recursion for $R(\psi)$.

For mean ergodic Markov processes $p_{\Psi_{i}| \Psi_0}(\psi|\psi') \to \pi_\Psi(\psi)$ as $i \to \infty$  and hence 
\begin{align}
\lim_{i \to \infty} r(\psi,i) = 0
\end{align}
Summing $r(\psi,i+1) - r(\psi,i)$ in $i$ from zero to infinity then clearly yields $(C-f(\psi))  \pi_\Psi(\psi)$.
By the Chapman-Kolmogorov equations
\begin{align}
&p_{\Psi_{i+1}| \Psi_0}(\psi|\psi') - p_{\Psi_{i}| \Psi_0}(\psi|\psi') \nonumber \\
&= \sum_{\psi'' \in \mathbb{X}} p_{\Psi_{i}| \Psi_0}(\psi''|\psi') 
\left\{p_{\Psi_{i+1}| \Psi_i}(\psi|\psi'')- \delta_{\psi,\psi''}\right\}
\end{align}
and thus
\begin{align}
&r(\psi,i+1)-r(\psi,i) \nonumber \\
&=\sum_{\psi' \in \mathbb{X}} (f(\psi')-C) \pi_\Psi(\psi') \left\{ p_{\Psi_{i+1}| \Psi_0}(\psi|\psi') - p_{\Psi_{i}| \Psi_0}(\psi|\psi') \right\} \nonumber \\
&= \sum_{\psi'' \in \mathbb{X}} \left\{ p_{\Psi_{i+1}| \Psi_i}(\psi|\psi'')- \delta_{\psi,\psi''} \right\}
r(\psi'',i)
\end{align}
If this expression for  $r(\psi,i+1) - r(\psi,i)$ is also summed in $i$ from zero to infinity and then compared to the above result of the same sum, we obtain
\begin{align}
\label{eq:rec0}
(C-f(\psi))  \pi_\Psi(\psi) = \sum_{\psi'' \in \mathbb{X}} \left\{ p_{\Psi_{i+1}| \Psi_i}(\psi|\psi'')- \delta_{\psi,\psi''} \right\} R(\psi'')
\end{align}
For notational convenience we abbreviate the right hand-side of Equation \ref{eq:rec0} by $D(\psi)$.

For $\psi$ with $q>0$ and $\tilde{y} = 0$
\begin{align}
\label{eq:rec1}
D(\psi) =&(\bar{\lambda}\bar{\mu}-1)R(q,0) + \bar{\lambda}\bar{\mu}R(q+1,1) + \lambda\bar{\mu}R(q,1) \nonumber \\ &+ \lambda\bar{\mu}R(q-1,0)
\end{align}
For $\psi$ with $q>0$ and $\tilde{y} = 1$
\begin{align}
\label{eq:rec2}
D(\psi) =& ({\lambda}{\mu}-1)R(q,1) + \bar{\lambda}{\mu}R(q+1,1) + \bar{\lambda}{\mu}R(q,0) \nonumber \\ &+ \lambda{\mu}R(q-1,0)
\end{align}
And for $\psi$ with $q=0$ and $\tilde{y} = 0$
\begin{align}
\label{eq:rec3}
D(\psi) = -\lambda R(0,0) + \bar{\lambda}R(1,1) 
\end{align}
Adding Equations \ref{eq:rec1} and \ref{eq:rec2} yields
\begin{align}
\label{eq:rec4}
\bar{\lambda} R(q+1,1) - \lambda R(q,0) - \bar{\lambda}R(q,1) + \lambda R(q-1,0)
\end{align}
We now sum Equation \ref{eq:rec0} in two ways:
\begin{align}
\label{eq:rec5}
\sum_{\psi' \in \mathbb{X}: q' \leq q} D(\psi) = \bar{\lambda} R(q+1,1) - \lambda R(q,0) = M(q,0)
\end{align}
for $q \geq 0$ where we defined
\begin{align}
M(q,0) = \sum_{\psi' \in \mathbb{X}: q' \leq q} (C-f(\psi))  \pi_\Psi(\psi)
\end{align}
and
\begin{align}
\label{eq:rec6}
\sum_{\psi' \in \mathbb{X}: q' < q || q'=q,\tilde{y}=1} D(\psi) &= -\lambda \bar{\mu} R(q-1,0) -\lambda \bar{\mu} R(q,1) \nonumber \\ 
+& \bar{\lambda}\mu R(q,0) + \bar{\lambda}\mu R(q+1,1) = M(q,1)
\end{align}
for $q \geq 1$ where we defined 
\begin{align}
M(q,1) = \sum_{\psi \in \mathbb{X}: q' < q || q'=q,\tilde{y}=1} (C-f(\psi))  \pi_\Psi(\psi)
\end{align}
We can combine Equation \ref{eq:rec5} and \ref{eq:rec6} to obtain the first order recurrence
\begin{align}
\label{eq:rec7}
R(q+1,0) = \frac{\lambda \bar{\mu}}{\bar{\lambda}\mu} R(q,0) + \tilde{M}(q)
\end{align}
for $q \geq 0$ where
\begin{align}
\label{eq:rec8}
\tilde{M}(q) = \frac{1}{\mu}M(q+1,1)- M(q+1,0) + \frac{\lambda \bar{\mu}}{\bar{\lambda}\mu} M(q,0)
\end{align}
Note that
\begin{align}
\label{eq:rec9}
M(q,0) &= C \left( 1 - \frac{\pi_Q(0)}{\bar{\mu}}  \frac{\left( \frac{\lambda \bar{\mu}}{\bar{\lambda}\mu} \right)^{q+1}  }{1- \frac{\lambda \bar{\mu}}{\bar{\lambda}\mu}} \right) - \sum_{\psi \in \mathbb{X}: q' \leq q} f(\psi)  \pi_\Psi(\psi) \nonumber \\
&=  \frac{\pi_Q(0)}{\bar{\mu}}\frac{\left( \frac{\lambda \bar{\mu}}{\bar{\lambda}\mu} \right)^{q+1}  }{1- \frac{\lambda \bar{\mu}}{\bar{\lambda}\mu}} \left( \mu\log(\frac{\mu}{\lambda})  +  {\bar{\mu}} \log(\frac{{\bar{\mu}}}{\bar{\lambda}}) - C \right) \nonumber \\
&=  \frac{\bar{\lambda}}{\bar{\mu}} \left( \frac{\lambda \bar{\mu}}{\bar{\lambda}\mu} \right)^{q+1}  \left( \mu\log(\frac{\mu}{\lambda})  +  {\bar{\mu}} \log(\frac{{\bar{\mu}}}{\bar{\lambda}}) - C \right) \nonumber \\
&= c_{M0} \left( \frac{\lambda \bar{\mu}}{\bar{\lambda}\mu} \right)^{q+1}
\end{align}
and with this Equation \ref{eq:rec8} becomes
\begin{align}
\tilde{M}(q) = \frac{1}{\mu} M(q+1,1)
\end{align}
But
\begin{align}
M(q+1,1) &= M(q,0) + (C - \log\frac{\mu}{\lambda} ) \pi_Q(0) \frac{\mu}{\bar{\mu}} \left( \frac{\lambda \bar{\mu}}{\bar{\lambda}\mu} \right)^{q+1} \nonumber \\
&= \left( \frac{\lambda \bar{\mu}}{\bar{\lambda}\mu} \right)^{q+1} \left\{ c_{M0} + (C - \log\frac{\mu}{\lambda} ) \pi_Q(0) \frac{\mu}{\bar{\mu}} \right\} \nonumber
\end{align}
So we obtain
\begin{align}
\label{eq:rec11}
\tilde{M}(q) &= \left( \frac{\lambda \bar{\mu}}{\bar{\lambda}\mu} \right)^{q+1} \left\{  \frac{c_{M0}}{\mu} + (C - \log\frac{\mu}{\lambda} )  \frac{\pi_Q(0)}{\bar{\mu}} \right\} \nonumber \\
&= c_{\tilde{M}} \left( \frac{\lambda \bar{\mu}}{\bar{\lambda} \mu} \right)^{q+1}
\end{align}
and the generating function
\begin{align}
\tilde{M}(z) = \sum_{q \geq 0} \tilde{M}(q) z^q = \frac{c_{\tilde{M}} \frac{\lambda \bar{\mu}}{\bar{\lambda} \mu} }{1 - z \frac{\lambda \bar{\mu}}{\bar{\lambda} \mu}}
\end{align}

We now define two new sequences $a_q$ and $b_q$ such that
\begin{align}
\label{eq:rec12}
R(q,0) = a_q + b_q R(0,0)
\end{align}
Clearly, $a_0 = 0$ and $b_0 = 1$. By substituting Equation \ref{eq:rec12} into Equation \ref{eq:rec7} we find that
\begin{align}
a_{q+1} =  \frac{\lambda \bar{\mu}}{\bar{\lambda}\mu} a_q + \tilde{M}(q)
\end{align}
and
\begin{align}
b_{q+1} = \frac{\lambda \bar{\mu}}{\bar{\lambda}\mu} b_q
\end{align}
The solution to the recurrence $b_q$ is obvious
\begin{align}
b_q = \left( \frac{\lambda \bar{\mu}}{\bar{\lambda}\mu} \right)^q
\end{align}
In order to obtain the solution to the recurrence $b_q$ we employ the generating function method \cite{west}
\begin{align}
\sum_{q \geq 0} a_{q+1} z^q = \frac{\lambda \bar{\mu}}{\bar{\lambda}\mu} \sum_{q \geq 0} a_q z^q + \sum_{q \geq 0} \tilde{M}(q) z^q
\end{align}
and therefore
\begin{align}
z^{-1} A(z) = \frac{\lambda \bar{\mu}}{\bar{\lambda}\mu}  A(z) + \tilde{M}(z)
\end{align}
We can now solve this equation for $A(z)$ to obtain
\begin{align}
A(z) = \frac{\tilde{M}(z)}{z^{-1} -  \frac{\lambda \bar{\mu}}{\bar{\lambda}\mu}} = c_{\tilde{M}} \frac{\frac{\lambda \bar{\mu}}{\bar{\lambda} \mu}z}{\left( 1 -  \frac{\lambda \bar{\mu}}{\bar{\lambda}\mu}z \right)^2}
\end{align}
and the corresponding sequence
\begin{align}
a_q = c_{\tilde{M}} q \left( \frac{\lambda \bar{\mu}}{\bar{\lambda}\mu} \right)^q
\end{align}
Finally
\begin{align}
\label{eq:rec13}
R(q,0) = a_q + b_q R(0,0) 
= c_{\tilde{M}} q \left( \frac{\lambda \bar{\mu}}{\bar{\lambda}\mu} \right)^q
+ \left( \frac{\lambda \bar{\mu}}{\bar{\lambda}\mu} \right)^q R(0,0)
\end{align}

Further
\begin{align}
\sum_{\psi \in \mathbb{X}} r(\psi,i) = \sum_{\psi' \in \mathbb{X}} (f(\psi')-C) \pi_\Psi(\psi') \sum_{\psi \in \mathbb{X}}  p_{\Psi_{i}| \Psi_0}(\psi|\psi') = 0
\end{align} and summing this equation in $i$ yields
\begin{align}
\label{eq:rec14}
\sum_{\psi \in \mathbb{X}} R(\psi) = 0
\end{align}
This result now allows us to compute $R(0,0)$: Using Equation \ref{eq:rec5} we can write
\begin{align}
\label{eq:rec15}
\sum_{\psi \in \mathbb{X}} R(\psi) &= \sum_{q \geq 0} R(q,0) + \sum_{q \geq 0} \left( \frac{\lambda}{\bar{\lambda}} R(q,0) + \frac{1}{\bar{\lambda}}M(q,0) \right) \nonumber \\
&= \frac{1}{\bar{\lambda}} \sum_{q \geq 0} \left( R(q,0) + M(q,0) \right)
\end{align}
Combining Equations \ref{eq:rec12},\ref{eq:rec14} and \ref{eq:rec15} then yields
\begin{align}
R(0,0) &= - \frac{\sum_{q \geq 0} a_q + \sum_{q \geq 0} M(q,0) }{\sum_{q \geq 0} b_q } \nonumber \\
&= -c_{\tilde{M}} \frac{\frac{\lambda \bar{\mu}}{\bar{\lambda} \mu}}{ 1 -  \frac{\lambda \bar{\mu}}{\bar{\lambda}\mu} } - c_{M0} \frac{\lambda \bar{\mu}}{\bar{\lambda} \mu}
\end{align}

Using the expressions for $R(j+1,1)$, $R(j,0)$ and $M(j,0)$ from Equations \ref{eq:rec5}, \ref{eq:rec13} and \ref{eq:rec9}, respectively, we are eventually in a position to simplify the expression  in Equation \ref{eq:rec-1}:
\begin{align}
&\sum_{i=0}^\infty \Cov(f(\Psi_0),f(\Psi_i)) =
\log \frac{1}{\bar{\lambda}} R(0,0) \nonumber \\
&+  \log \frac{\mu}{\lambda}
\left( \frac{\lambda}{\bar{\lambda}} \sum_{q \geq 0} R(q,0) + \frac{1}{\bar{\lambda}} \sum_{q \geq 0} M(q,0) \right)
+   \log \frac{{\bar{\mu}}}{\bar{\lambda}} \sum_{q > 0} R(q,0)
\end{align}
where
\begin{align}
\sum_{q \geq 0} R(q,0) &= c_{\tilde{M}} \frac{\frac{\lambda \bar{\mu}}{\bar{\lambda} \mu}}{\left( 1 -  \frac{\lambda \bar{\mu}}{\bar{\lambda}\mu} \right)^2 } + \frac{R(0,0)}{1 - \frac{\lambda \bar{\mu}}{\bar{\lambda}\mu}} \nonumber \\
&= - c_{M0} \frac{\frac{\lambda \bar{\mu}}{\bar{\lambda}\mu}}{1 - \frac{\lambda \bar{\mu}}{\bar{\lambda}\mu}}
\end{align}
and
\begin{align}
\sum_{q \geq 0} M(q,0) = c_{M0} \frac{\frac{\lambda \bar{\mu}}{\bar{\lambda}\mu}}{1 - \frac{\lambda \bar{\mu}}{\bar{\lambda}\mu}}
\end{align}
The claim then follows.
\end{proof}

\bibliography{IEEEabrv,biblio} 

\begin{thebibliography}{10}
\providecommand{\url}[1]{#1}
\csname url@samestyle\endcsname
\providecommand{\newblock}{\relax}
\providecommand{\bibinfo}[2]{#2}
\providecommand{\BIBentrySTDinterwordspacing}{\spaceskip=0pt\relax}
\providecommand{\BIBentryALTinterwordstretchfactor}{4}
\providecommand{\BIBentryALTinterwordspacing}{\spaceskip=\fontdimen2\font plus
\BIBentryALTinterwordstretchfactor\fontdimen3\font minus
  \fontdimen4\font\relax}
\providecommand{\BIBforeignlanguage}[2]{{%
\expandafter\ifx\csname l@#1\endcsname\relax
\typeout{** WARNING: IEEEtran.bst: No hyphenation pattern has been}%
\typeout{** loaded for the language `#1'. Using the pattern for}%
\typeout{** the default language instead.}%
\else
\language=\csname l@#1\endcsname
\fi
#2}}
\providecommand{\BIBdecl}{\relax}
\BIBdecl

\bibitem{Bedekar98onthe}
A.~S. Bedekar and M.~Azizoglu, ``On the information-theoretic capacity of
  discrete-time queues,'' \emph{IEEE Trans. Inform. Theory}, vol.~44, pp.
  446--461, 1998.

\bibitem{meyntwee09}
S.~Meyn and R.~L. Tweedie, \emph{Markov Chains and Stochastic Stability},
  2nd~ed.\hskip 1em plus 0.5em minus 0.4em\relax New York, NY, USA: Cambridge
  University Press, 2009.

\bibitem{gray09}
R.~M. Gray, \emph{Probability, Random Processes, and Ergodic Properties},
  2nd~ed.\hskip 1em plus 0.5em minus 0.4em\relax New York, NY, USA: Springer,
  2009.

\bibitem{Anantharam96}
V.~Anantharam and S.~Verdu, ``Bits through queues,'' \emph{IEEE Trans. Inform.
  Theory}, vol.~42, pp. 4--18, 1996.

\bibitem{takagi93}
H.~Takagi, \emph{Queueing Analysis: A Foundation of Performance
  Evaluation}.\hskip 1em plus 0.5em minus 0.4em\relax Amsterdam, The
  Netherlands: North Holland, 1993, vol.~3.

\bibitem{feinstein54}
A.~Feinstein, ``A new basic theorem of information theory,'' \emph{IRE Trans.
  Inform. Theory}, vol. PGIT-4, pp. 2--22, 1954.

\bibitem{spieksma1994}
F.~M. Spieksma and R.~L. Tweedie, ``Strengthening ergodicity to geometric
  ergodicity for markov chains,'' \emph{Stochastic Models}, vol.~10, pp.
  45--75, 1994.

\bibitem{CTCNMeyn2007}
S.~Meyn, \emph{Control Techniques for Complex Networks}, 1st~ed.\hskip 1em plus
  0.5em minus 0.4em\relax New York, NY, USA: Cambridge University Press, 2007.

\bibitem{Kontoyiannis01spectraltheory}
I.~Kontoyiannis and S.~Meyn, ``Spectral theory and limit theorems for
  geometrically ergodic markov processes,'' in \emph{the 2001 INFORMS Applied
  Probability Conference}, 2001, pp. 304--362.

\bibitem{whitt1991}
W.~Whitt, ``Asymptotic formulas for markov processes with applications to
  simulation,'' \emph{Operations Research}, vol.~40, pp. 279--291, 1992.

\bibitem{burman1980}
D.~Y. Burman, ``A functional central limit theorem for birth and death
  processes,'' \emph{ORSA/TIMS Conference}, 1980.

\bibitem{KemenySnell1960}
J.~G. Kemeny and J.~L. Snell, \emph{Finite Markov Chains}.\hskip 1em plus 0.5em
  minus 0.4em\relax Princeton, N.J., USA: Van-Nostrand, 1960.

\bibitem{grassmann1987}
W.~Grassmann, ``The asymptotic variance of a time average in a birth-death
  process,'' \emph{Annals of Operations Research}, vol.~8, pp. 165--174, 1987.

\end{thebibliography}
\bibliographystyle{IEEEtran}

\end{document}